\newtheorem{theorem}{Theorem}[section]
\theoremstyle{definition}
\newtheorem{definition}[theorem]{Definition}
\newtheorem{remark}[theorem]{Remark}
\numberwithin{equation}{section}
\newcommand{\LL}{\mathcal{L}}
\begin{document}

\title{A Digital signature scheme based on Module-LWE and Module-SIS}

\author{Huda Naeem Hleeb Al-Jabbari}
\address{Department of Mathematics, Tarbiat Modares University, 14115-134, Tehran, Iran}
\curraddr{} \email{h-aljbbari@modares.ac.ir}
\thanks{}

\author{Abbas Maarefparvar$^{*}$}
\address{Department of Mathematics and Computer Science, University of Lethbridge, Lethbridge, Canada}
\curraddr{}
\email{abbas.maarefparvar@uleth.ca}
\thanks{$^{*}$Corresponding author}
\date{}
\dedicatory{}
\commby{}

\subjclass[2010]{Primary: 94A60,  11T71, 68P25, 06B10 -- Secondary: 68T05}

\keywords{Lattice-based cryptography, Module Learning With Errors, Digital Signature}

\begin{abstract}
In this paper,  we present an improved version of the digital signature scheme proposed by Sharafi and Daghigh \cite{sharafi2022ring} based on Module-LWE and Module-SIS problems. Our proposed signature scheme has a notably higher security level and smaller decoding failure probability, than the ones in the Sharaf-Daghigh scheme, at the expense of enlarging the module of the underlying basic ring.
\end{abstract}

\maketitle

 \section*{Notations}

We use bold lower-case letters to denote vectors, e.g., $\textbf{a}$, and use bold upper-case letters like $\textbf{A}$ to denote matrices. The concatenation of two vectors $\textbf{a}$ and $\textbf{b}$ is denoted by
$\textbf{a} || \textbf{b}$. 
The uniform probability distribution over some finite set $S$ will be denoted by $U(S)$. If $s$ is sampled from a distribution $D$, we write $s \leftarrow D$. Logarithms are base $2$ if not stated otherwise. 
By default, all vectors will be column vectors, and for a vector $\textbf{v}$, we denote by $\textbf{v}^T$ its transpose. The boolean operator
$\llbracket \text{statement} \rrbracket$ evaluates to $1$ if statement is true, and to $0$ otherwise.

 \section{Introduction}

 The advent of quantum computing poses significant challenges to the security of classical digital signature schemes. In anticipation of this paradigm shift, the cryptographic community has turned its focus towards post-quantum cryptography, seeking to develop algorithms that can withstand the computational prowess of quantum adversaries.

 In 2016, the National Institute of Standards and Technology (NIST) initiated a process to solicit, evaluate, and standardize one or more quantum-resistant public-key cryptographic algorithms \cite{alagic2022status}.  The submitted algorithms for the NIST PQC standardization are designed based on various hard computational problems, including lattices, codes, and hash functions, which are currently believed to resist quantum algorithm attacks.  Due to its rich number-theoretic structure, Lattice-Based Cryptography (LBC)  is one of the most promising alternatives among all the candidates. In particular,  most of the algorithms selected for NIST standardization, are lattice-based ones, namely Crystals-Kyber \cite{CRYSTALS-KYBER} (in the part ``Public-key Encryption and Key-establishment Algorithms'') and Crystals-Dilithium \cite{CRYSTALS-Dilithium} and FALCON \cite{Falcon-NIST} (in the part  ``Digital Signature Algorithms''), see \cite{alagic2022status}.

 Recently, inspired by the Lindner-Pikert cryptosystem \cite{lindner2011better}, Sharafi and Daghigh \cite{sharafi2022ring} designed a lattice-based digital signature whose security is based on the hardness assumption of the Ring Learning With Errors (Ring-LWE) and the Ring Short Integer Solution (Ring-SIS) problems, see Section \ref{section, lattice-based cryptography} for their definitions. In this paper, using the Module Learning With Errors problem (Module-LWE) and the Module Short Integer Solution problem (Module-SIS) we give some improvements to the Sharafi-Daghigh signature scheme. In particular, we show that using the Module-LWE/SIS would significantly increase the security of the algorithm. In addition, applying some implementation considerations we achieve much shorter public key and signature sizes than the ones in \cite{sharafi2022ring}.




\section{lattice-based cryptography} \label{section, lattice-based cryptography}

The study of lattices, specifically from a computational point of view,
was marked by two major breakthroughs: the development of the LLL 
lattice reduction algorithm by Lenstra, Lenstra, and Lov\'asz in the early
80s \cite{lenstra1982factoring}, and Ajtai's discovery of a connection between the worst-case and
average-case hardness of certain lattice problems in the late 90's \cite{ajtai1996generating}. Recently, lattice-based cryptography has received much attention from cryptographers and a comprehensive
background on hard problems and security reductions exists. In this section, we introduce some basic notions in lattice-based cryptography and present some fundamental computational hard problems in this area. 


\begin{definition} \label{definition, Lattice}
For   $d \geq 1$ integer, a $d$-dimensional lattice $\mathcal{L}$ is a discrete subgroup of $\mathbb{R}^d$. A basis of the lattice $\mathcal{L}$ is a set $\textbf{B}=\{\textbf{b}_1,\dots,\textbf{b}_n\} \subseteq \mathbb{R}^d$ such that $\textbf{b}_i$'s are linearly independent vectors in $\mathbb{R}^d$ and all their integer combinations form $\mathcal{L}$:
\begin{equation} \label{equation, L(B)}
\mathcal{L}(\textbf{B})=\mathcal{L}(\textbf{b}_1,\dots,\textbf{b}_n)=\left\{\sum_{i=1}^{n} c_i \textbf{b}_i \, : \, c_i \in \mathbb{Z}, \, \forall i=1,\dots,n \right\}.
\end{equation}
\end{definition}

\begin{remark}
The integers $d$ and $n$ are called  \textit{dimension} and \textit{rank} of the lattice $\mathcal{L}$, respectively (Note that $n \leq d$). If $n=d$, the lattice $\mathcal{L}$ is called \textit{full rank}. Throughout the paper, we only
consider the full-rank ($n$-dimensional) lattices in $\mathbb{R}^n$.
\end{remark}

\begin{remark}
	A lattice basis $\textbf{B}$ is not unique. For a lattice $\mathcal{L}$   with basis $\textbf{B}$, and for every unimodular
	matrix $\textbf{U} \in \mathbb{Z}^{n \times n}$ (i.e., one having determinant $\pm 1$), $\textbf{B}.\textbf{U}$ is also a basis of $\mathcal{L}(\textbf{B})$.
\end{remark}

\begin{definition}
	For a lattice $\LL$, the \textit{minimum distance} of $\LL$ is the length of a shortest lattice element:
	$$ \lambda_1({\LL}):= min_{\textbf{v} \in \LL \setminus \lbrace 0 \rbrace} ||\textbf{v}||,$$
	  where $ ||\textbf{v}||$ is the Euclidean norm of $\textbf{v}=(v_1,\dots,v_n)$, i.e., $||\textbf{v}||=\sqrt{\sum_{i=1}^{n} v_i^2}$.
\end{definition}


\subsection{Computational hard problems on lattices}
The shortest vector problem (SVP) and the closest vector problem (CVP) are two fundamental problems in lattices and their conjectured intractability is the foundation for a large number of cryptographic applications of lattices.
\begin{definition} \label{definition, SVP}
For a lattice $\mathcal{L}$ with basis $\textbf{B}$, the Shortest Vector Problem (SVP) asks to find a shortest nonzero lattice vector, i.e., a vector $\textbf{v} \in \mathcal{L}(\textbf{B})$ with $||\textbf{v}||=\lambda_1({\LL}(\textbf{B}))$. In the $\gamma$-approximate $\text{SVP}_{\gamma}$, for $\gamma \geq 1$, the goal is to find a shortest nonzero lattice vector $\textbf{v} \in \mathcal{L}(\textbf{B}) \backslash \{\textbf{0}\}$ of norm at most $||\textbf{v}|| \leq \gamma. \lambda_1(\mathcal{L}(\textbf{B}))$.
\end{definition}

\begin{definition} \label{definition, CVP}
For a lattice $\mathcal{L}$ with basis $\textbf{B}$ and a target vector $\textbf{t} \in \mathbb{R}^n$, the Closest Vector Problem (CVP) asks, to find a vector $\textbf{v} \in \mathcal{L}(\textbf{B})$ such that $|| \textbf{v}-\textbf{t} ||$ is minimized. In the $\gamma$-approximate $\text{CVP}_{\gamma}$, for $\gamma \geq 1$, the goal is to find a lattice vector $\textbf{v} \in \mathcal{L}$  such that $|| \textbf{v}-\textbf{t} || \leq \gamma . \text{dist}(\textbf{t},\mathcal{L}(\textbf{B}))$ where
\begin{equation*}
 \text{dist}(\textbf{t},\mathcal{L}(\textbf{B}))=\inf \{|| \textbf{w}-\textbf{t} || \, : \, \textbf{w} \in \mathcal{L}(\textbf{B}) \}.
\end{equation*}
\end{definition}

\begin{remark}
	One can show that SVP and CVP and their $\gamma$-approximate versions are NP-hard problems, see \cite{hanrot2011algorithms} for a survey on this subject.
\end{remark}

\subsection{Learning With Errors (LWE)} 

A fundamental problem in lattice-based cryptography is the Learning with
Errors problem (LWE).
The seminal work of Regev \cite{regev2005lattices} establishes reductions from standard
problems such as SVP in general lattices to LWE, suggesting that
LWE is indeed a difficult problem to solve. In particular, the ability to solve LWE
in dimension $n$ implies an efficient algorithm to find somewhat short vectors in
\textit{any} $n$-dimensional lattice. LWE is parameterized by positive integers $n$ and $q$ and an error distribution $\chi$ over $\mathbb{Z}$, which is usually taken to be a discrete Gaussian of width $\alpha \in (0,1)$.

\begin{definition}
	For a vector $\textbf{s} \in \mathbb{Z}_q^n$ called the secret, the LWE distribution $A_{\textbf{s},\chi}$ over $\mathbb{Z}_q^n \times \mathbb{Z}_q$ is sampled by choosing $\textbf{a} \in \mathbb{Z}_q^n$ uniformly at random, choosing $e \leftarrow \chi$, and outputting $(\textbf{a},b=<\textbf{s},\textbf{a}>+e \, \mathrm{mod} \, q)$, where $<\textbf{s},\textbf{a}>$ denotes the inner product of the vectors $\textbf{s}$ and $\textbf{a}$.
\end{definition}

\textit{Search} and \textit{decision} are two main types of the LWE problem. The first case is to find the secret given LWE samples, and the second case is to distinguish between LWE samples and uniformly random ones.
\begin{definition} \label{definition, Search-LWE}
	(\textbf{Search-$LWE_{n,q,\chi,m}$}).
	Suppose $m$ independent samples $(\textbf{a}_i,b_i)\in \mathbb{Z}_q^n \times \mathbb{Z}_q$ are drawn by using $A_{\textbf{s},\chi}$ for a uniformly random $\textbf{s} \in \mathbb{Z}_q^n$ (fixed for all samples), find $\textbf{s}$.
\end{definition}

\begin{definition} \label{definition, Decision-LWE}
	(\textbf{Decision -$LWE_{n,q,\chi,m}$}).
	Given $m$ independent samples $(\textbf{a}_i,b_i) \in \mathbb{Z}_q^n \times \mathbb{Z}_q$ where every sample is distributed according to either $A_{\textbf{s},\chi}$ for a uniformly random $\textbf{s} \in \mathbb{Z}_q^n$ (fixed for all samples), or the uniform distribution, distinguish which is the case (with non-negligible advantage).
\end{definition}


The concrete and asymptotic hardness of theLWE has recently been surveyed in \cite{albrecht2015concrete, herold2018asymptotic}. In particular, the hardness of the LWE can be reduced to the approximate SVP in the worst case. 

\subsection{Ring-LWE} \label{subsection, Ring-LWE}

The Ring Learning With Errors problem (Ring-LWE) is a variant of the Learning With Errors problem (LWE) that was introduced by Lyubashevsky et al. in \cite{LPR10} to achieve faster and more efficient cryptographic schemes. Like LWE, the security of Ring-LWE is based on the hardness of solving certain lattice problems in the worst case, even with quantum computers. The specific lattice problem that underlies Ring-LWE is the approximate Shortest Vector Problem (SVP) on ideal lattices, which are special types of lattices that have an algebraic structure related to the ring of integers of a number field.

Let $K$ be a number field, $\mathcal{O}_K$ its ring of integers, and $q \geq 2$ a rational integer. The search variant of Ring-LWE with parameters $K$ and $q$ consists in recovering a secret $s \in \mathcal{O}_K^{\vee}/q\mathcal{O}_K^{\vee}$ from arbitrarily many samples $(a_i,a_i.s+e_i)$, where 
\begin{equation} \label{equation, dual of Ok}
	\mathcal{O}_K^{\vee}=\{x \in K\, | \, \text{Tr}_{K/\mathbb{Q}}(xy) \in \mathbb{Z}, \, \forall y \in \mathcal{O}_K\}
\end{equation}
 denotes the dual of $\mathcal{O}_K$ (by $\text{Tr}_{K/\mathbb{Q}}(a)$ we mean the \textit{trace} of $a \in K$ over $\mathbb{Q}$), each $a_i$ is uniformly sampled in $\mathcal{O}_K/q \mathcal{O}_K$ and each $e_i$ is a small random element of $K_{\mathbb{R}}:=K \otimes_{\mathbb{Q}} \mathbb{R}$. The decision variant of Ring-LWE consists in distinguishing arbitrarily many such pairs for a common secret $s$ chosen uniformly random in $\mathcal{O}_K^{\vee}/q \mathcal{O}_K^{\vee}$, from uniform samples in $\mathcal{O}_K^{\vee}/q\mathcal{O}_K^{\vee} \times K_{\mathbb{R}}/q \mathcal{O}_K^{\vee}$ \cite{rosca2018ring}.

\begin{remark}
	 The ring $R=\mathcal{O}_K$ is typically taken to be a power-of-two \textit{cyclotomic ring}, i.e., $R=\mathbb{Z}[\zeta_n]$ where $\zeta_n$ is a primitive $n^{\text{th}}$-root of unity for  $n=2^k$ a power of $2$. This has led to practical implementation advantages and a simpler interpretation of formally defined Ring-LWE, see \cite{LPR10,LPR13, lyubashevsky2013toolkit,peikert2016not}
 for more details on this subject.
\end{remark}

\subsection{Module-LWE}
The Module Learning With Errors problem (Module-LWE) is a generalization of the Ring Learning With Errors problem (Ring-LWE) that allows for more flexibility and efficiency in lattice-based cryptography \cite{brakerski2014leveled, LaSt15}.  Module-LWE is based on the hardness of finding small errors in linear equations over modules, which are collections of ring elements with a common structure. Module-LWE can be seen as a way of interpolating between LWE and Ring-LWE, where the module rank (the number of ring elements in each module) determines the level of security and performance. In particular, Ring-LWE is a special case of Module-LWE with module rank $1$ \cite{boudgoust2023hardness}.


In order to give the formal definition of the Module-LWE, we have to fix some notations. Similar to the Ring-LWE, let $K$ be a number field of degree $n$, $R$ be the ring of integers of $K$, $R^{\vee}$ be the dual of $R$, as defined in \eqref{equation, dual of Ok}, and $q \geq 2$ be integer. Also let $K_{\mathbb{R}}:=K \otimes_{\mathbb{Q}} \mathbb{R}$,  $\mathbb{T}_{R}:=K_{\mathbb{R}}/R^{\vee}$, $R_q:=R/(qR)$ and $R_q^{\vee}:=R^{\vee}/(qR^{\vee})$.

\begin{definition} \label{definition, Module-LWE distribution}
	(\textbf{Module-LWE distribution, statement from \cite{albrecht2017large}}) Let $M:=R^d$. For $\textbf{s} \in (R_q^{\vee})^d$ and an error distribution $\psi$ over $K_{\mathbb{R}}$, we sample  the module learning with error distribution $A_{d,q,\textbf{s},\psi}^{M}$ over $(R_q)^d \times \mathbb{T}_{R^{\vee}}$ by outputting $(\textbf{a}, \frac{1}{q} <\textbf{a},\textbf{s}>+e \, \, \mathrm{mod} \,  R^{\vee})$ where $ a \leftarrow U((R_q)^d)$ and $e \leftarrow \psi$. 
	
\end{definition}

\begin{definition} \label{definition, Module-LWE}
	(\textbf{Decision/Search Module-LWE problem, statement from \cite{albrecht2017large}}) Let $\Psi$ be a family of distributions over $K_{\mathbb{R}}$ and $D$ be distribution over $R_q^{\vee}$. For	
	$M=R^d$,  the decision module learning with errors problem $\text{Module-LWE}_{m,q,\Psi}^{(M)}$ entails distinguishing $m$ samples of $U((R_q)^d) \times \mathbb{T}_{R^{\vee}}$ from $A_{q,\textbf{s},\psi}^{(M)}$ where $\textbf{s} \leftarrow D^d$ and $\psi$ is an arbitrary distribution in $\Psi$.
\end{definition}

\begin{remark}
Let $[K:\mathbb{Q}]=r_1+2r_2$, where $r_1$ and $r_2$ denote the number of real and complex (non-real) embedding of $K$, respectively.	One can show that  
	\begin{equation} \label{equation, KR is isomorphic to H}
K_{\mathbb{R}} \simeq	H:=\{\textbf{x} \in \mathbb{R}^{r_1} \times \mathbb{C}^{2r_2} \, : \, x_i=\overline{x_{i+r_2}} \, , \, i=r_1+1, \dots, r_1+r_2\},
	\end{equation}
where $\overline{x_{i+r_2}}$ denotes the complex conjugate of $x_{i+r_2}$. Hence the distribution over $K_{\mathbb{R}}$ are sampled by choosing an element of the space $H$ according to the distribution and mapping back to $K_{\mathbb{R}}$ via the isomorphism \eqref{equation, KR is isomorphic to H}, see \cite[Section 2.3]{albrecht2017large}. 
\end{remark}

Module-LWE comes with hardness guarantees given by lattice problems based on a certain class of lattices, called  \textit{module lattices}. 
  In this case, solving the approximate SVP on module lattices
for polynomial approximation factors would permit solving Module-LWE (and thus Ring-LWE) efficiently \cite{albrecht2017large}.

\begin{remark}
    Module-LWE is a promising alternative to Ring-LWE that can resist potential attacks that exploit the algebraic structure of the rings \cite{CDW17}. Therefore, Module-LWE may provide a higher level of security than Ring-LWE, while still being more efficient than plain LWE. Moreover, Module-LWE allows for more fine-grained control over the security level by adjusting the module rank, which is not possible with efficient Ring-LWE schemes. Furthermore, Module-LWE implementations can be easily adapted to different security levels by reusing the same code and parameters \cite{albrecht2017large}.
\end{remark}


\subsection{SIS, Ring-SIS, and Module-SIS problems}
The Short Integer Solution (SIS) problem, introduced by Ajtai \cite{ajtai1996generating}, serves as one of the foundations of numerous lattice-based cryptographic protocols. The SIS problem is to find a short nonzero solution $\textbf{z} \in \mathbb{Z}^m$, with $0 < ||\textbf{z} || \leq \beta$, to the homogeneous linear system $\textbf{A}\textbf{z} \equiv \textbf{0} \, \mathrm{mod}\, q$ for uniformly random $\textbf{A} \in \mathbb{Z}^{m \times n}$, where $m,n,q$ denote positive integers and $\beta \in \mathbb{R}$. Inspired by the efficient NTRU encryption scheme \cite{NTRU98}, Micciancio \cite{micciancio2002generalized, micciancio2007generalized} initiated an approach that consists of changing the SIS problem to variants involving structured matrices. This approach was later replaced by a more powerful variant referred to as the Ring Short Integer Solution (Ring-SIS) problem \cite{lyubashevsky2006generalized, peikert2006efficient}. There are several reductions from some hard lattice problems to SIS and Ring-SIS, see \cite[Section 3]{LaSt15}. 
The Module Short Integer Solution (Module-SIS) problem is a generalization of the Ring Short Integer Solution (Ring-SIS) problem that involves finding short vectors in module lattices, which are lattices with a module structure. 
By viewing the rings as modules of rank $1$, we can extend the definition of Ring-SIS to Module-SIS and study its hardness and applications.

\begin{definition} \label{definition, Module-SIS}
	Let $K$ be a number field with ring of integers $R$, and $R_q:=R/qR$ where $q \geq 2$ is an integer. For positive numbers $m,d \in \mathbb{Z}$ and $\beta \in \mathbb{R}$, the problem $\text{Module-SIS}_{q,n,m,\beta}$  is to find $z_1,\dots,z_m \in R$ such that  $ 0<  ||(z_1,\dots,z_m) || \leq \beta$ and $\sum_{i=1}^{m} \textbf{a}_i.z_i \equiv \textbf{0} \, \mathrm{mod}\, q$, where  $\textbf{a}_1,\dots, \textbf{a}_m \in R_q^d$ are chosen independently from the uniform distribution.
\end{definition}

\begin{remark}
	Note that $N=nd$ denotes the dimension of the corresponding module lattice and gives the complexity statements for $N$ growing to infinity, where $n$ denotes the extension degree of $K$ over $\mathbb{Q}$ \cite[$\S$3]{LaSt15}. For the worst-case to average-case reduction from computational hard lattice problems to Module-SIS problems, see \cite[Theorem 4]{LaSt15}.
\end{remark}



\section{Our Contribution}
The worst-case to average-case reductions for the module problems
are (qualitatively) sharp, in the sense that there exist converse reductions. This property is
not known to hold in the context of Ring-SIS/Ring-LWE \cite{LaSt15}. In addition, using these module problems, one could design cryptographic algorithms whose key sizes are notably smaller than similar ring-based schemes, see e.g.  \cite{CRYSTALS-Dilithium, CRYSTALS-KYBER}.
 In this section,  we present an improved version of the digital signature proposed in \cite{sharafi2022ring} based on Module-LWE and Module-SIS problems. Similar to the Sharafi-Daghigh scheme \cite{sharafi2022ring}, the structure of our digital signature algorithm is based on the ``hash-and-sign'' approach and ``Fiat-Shamir paradigm'' \cite{Lyu09, Lyu12}. Our contribution to the proposed signature scheme can be summarized as follows:
 \begin{itemize}
 	\item 
Using Module-LWE and Module-SIS problems, instead of Ring-LWE and Ring-SIS respectively,  would increase the security levels at the cost of slightly increasing the key sizes, see Section \ref{section, parameters}.
 	
 	\item 
 Inspired by some celebrated lattice-based algorithms \cite{CRYSTALS-Dilithium, CRYSTALS-KYBER}, we generate the public key $\textbf{A}$ from a $256$-bits \text{seed} $\zeta$ as input of a hash function $G$. Based on this standard technique, instead of $n \log q$ bits as in  \cite{sharafi2022ring}, one needs only $256$ bits to store the public key $\mathbf{A}$.

 	\item 
The encode and decode functions (for serializing a polynomial in byte arrays and its inverse) used in \cite{sharafi2022ring} are the straightforward methods introduced in Lindner-Peikert scheme \cite{lindner2011better}.  In this paper,  as a modification,  we use the \textit{NHSEncode} and \textit{NHSDecode} functions given in  NewHope-Simple \cite{alkim2016newhope}.  This method will decrease  the decoding failure rate significantly, see Sections \ref{section, Decoding failure probability} and \ref{section, parameters}.
 	
 	\item 
 	Following Crystals-Kyber \cite{CRYSTALS-KYBER} and NewHope \cite{NewHope,NewHope-USENIX,alkim2016newhope}, we use the centered binomial distribution, instead of the Gaussian one as used in \cite{sharafi2022ring}, to sample noise and secret vectors in the proposed signature scheme. This gives a  more efficient implementation than \cite{sharafi2022ring} and increases the security against side-channel attacks.
 \end{itemize}
  

\section{Proposed Digital Signature}

In this section, we present a Module-LWE based version of Sharafi-Daghigh digital signature \cite{sharafi2022ring}. It must be noted that we only provide a module version of the signature scheme and its efficient implementation would be an interesting challenge for future works. So we will describe each step by pseudocodes, however, one can apply the corresponding auxiliary functions of any Module-LWE based algorithm, e.g. Crystals schemes \cite{CRYSTALS-Dilithium,CRYSTALS-KYBER}, to gain a test implementation.

\subsection{Encoding and Decoding functions} \label{section, encoding-decoding} In \cite{sharafi2022ring} in order to encode a bit array $\nu=(\nu_0,\dots,\nu_{n-1}) \in \{0,1\}^n$ into a polynomial  $\textbf{v} \in R_q$ and its inverse, the authors use the following  methods as proposed by Lindner and Peikert \cite{lindner2011better}:
\begin{align*}
&\textbf{v}=\text{Encode}(\nu_0,\dots,\nu_{n-1})=\sum_{i=0}^{n-1} \nu_i.\lfloor \frac{q}{2}\rfloor . X^i, \\
&\mu=(\mu_0,\dots,\mu_{n-1})=\text{Decode}\left(\sum_{i=0}^{n-1} v_i . X^i\right), \quad \text{where} \quad \mu_i=\left\{
\begin{array}{rl}
	1 & \text{if} \,  v_i \in \left[- \lfloor \frac{q}{4}\rfloor,\lfloor \frac{q}{4}\rfloor \right) \\
	0 & \text{otherwise}. \\
\end{array} \right.
\end{align*}

As mentioned before, one of our contributions is to use different encoding-decoding functions leading to a much lower error rate. We use the \textit{NHSEncode} and \textit{NHSDecode} functions introduced in \textit{NewHope-Simple} \cite{alkim2016newhope}. Moreover, sampling secret and error polynomials from a \textit{binomial distribution} enables us to apply the same analysis as \textit{NewHope} \cite{NewHope-USENIX,alkim2016newhope} to obtain an error rate bounded by $2^{-60}$. In \textit{NHSEncode}, each bit of $\nu \in \{0,1\}^{256}$ is encoded into four coefficients. The decoding function \textit{NHSDecode} maps from four coefficients back
to the original key bit; 
Take four coefficients (each in the range  $\{0,\dots,q-1\}$), subtract
$\lfloor \frac{q}{2} \rfloor$ from each of them, accumulate their absolute values and set the key bit to $0$ if the sum is larger
than $q$ or to $1$ otherwise. See Algorithms $1$ and $2$ in \textit{NewHope-Simple} \cite{alkim2016newhope} for more details.

\subsection{Key Generation} In order to decrease size of the public key, in comparison to \cite{sharafi2022ring}, one can use a \textit{seed} (as input parameter of informal function $\text{GenA}$) to generate uniformly public matrix $\hat{\textbf{A}} \in R_q^{k \times k}$ in NTT domain. To accomplish this, for instance, the key generation algorithms of Crystals schemes \cite{CRYSTALS-Dilithium,CRYSTALS-KYBER} could be applied. Likewise, the secret and error vectors $\textbf{s}, \textbf{e} \in R_q^k$ could be sampled from a seed. Moreover, in order to ease implementation, $\textbf{s}$ and $\textbf{e}$ would be drawn from the binomial distribution with standard derivation $\eta$ instead of the Gaussian one. Following Crystals-Dilithum \cite{CRYSTALS-Dilithium}, we use the notation $S_{\eta}$ to denote the subset of $R_q$ consisting of all polynomials whose coefficients have size at most $\eta$. Also, we denote the corresponding informal function by $\text{GenSE}$. The pseudocode for the key generation is described in Algorithm \ref{alg:key gen}.

\subsection{Signing} In order to generate a signature $\sigma$ assigned to a message $M$, the signer first uses a collision-resistant hash function, denoted by $\text{CRH}$, and computes the hash of $M$, say $\mu$. Then using a random $256$-bits coin $r$, as input value of the function $\text{GenSE}$, she generates the vectors $\textbf{e}_1,\textbf{e}_2 \in R_q^k$ and $e_3,e_4 \in R_q$ from a binomial distribution of parameter $\eta$. The signature $\sigma$ would contain the following four components 
\begin{align*}
\textbf{z}_1&=\textbf{A} \textbf{e}_1+\textbf{e}_2 \in R_q^k, \\
z_2&=\textbf{P}\textbf{e}_2+e_4 \in R_q, \\
z_3&= \textbf{A} \textbf{P} \textbf{e}_1+e_3+\text{NHSEncode}(\mu) \in R_q, \\
h&=\text{CRH}(\mu || \text{CRH}(\text{NHSDecode}(\textbf{A} \textbf{s} \textbf{e}_2))).
\end{align*}
 (Recall that the functions $\text{NHSEncode}$ and $\text{NHSDecode}$ are introduced in Section \ref{section, encoding-decoding} and $\textbf{s}$ denotes the secret key). The pseudocode for the signing step is described in Algorithm \ref{alg:signing}.

\subsection{Verification} As described in Algorithm \ref{alg:verify}, using the public key $\textbf{P}$, the verifier accepts $\sigma=(\textbf{z}_1,z_2,z_3,h)$ as a valid signature of the message $M$ if and only if both of the following conditions hold:
\begin{itemize}
	\item[(1)]
$\text{NHSDecode}(z_2+z_3-\textbf{P}\textbf{z}_1)=\text{CRH}(M)$,
	\item[(2)]
	$h=\text{CRH}(\text{CRH}(M) || \text{CRH}(\text{NHSDecode}(z_2))$.
\end{itemize}

	\begin{algorithm}
	\caption{key generation}\label{alg:key gen}
	\begin{algorithmic}[1]
		\State  \textbf{Output}: Public key $pk \in \{0,1\}^{256} \times R_q^k$
		\State  \textbf{Output}: Secret key $sk \in R_q^k$
		\State $\zeta \leftarrow \{0,1\}^{256}$
		\State $(\rho , \xi) \in \{0,1\}^{256 \times 2}  := H(\zeta)$ \Comment{$H$ is instantiated as SHAKE-256 throughout}
		\State $\textbf{A} \in R_q^{k \times k} :=\text{GenA}(\rho)$  \Comment{$\textbf{A}$ is stored in NTT representation as $\hat{\textbf{A}}$}
		\State $(\textbf{s},\textbf{e}) \in S_{\eta}^{k} \times S_{\eta}^{k}:=\text{GenSE}(\xi)$ \Comment{Sample $\textbf{s} \in R_q^k$  and $\textbf{e} \in R_q^k$ from a binomial distribution of parameter $\eta$}
		
		\State  $\textbf{P}:=\text{NTT}^{-1}(\hat{\textbf{A}}^T \circ\text{NTT}(\textbf{s}))+\textbf{e}$ \Comment{$\textbf{P}:=\textbf{A}^T\textbf{s}+\textbf{e} \in R_q^k$}
	\State  \textbf{Return} \, $(pk=(\rho,\textbf{P}),sk=\textbf{s}$
	\end{algorithmic}
\end{algorithm}

	\begin{algorithm}
	\caption{Signing}\label{alg:signing}
	\begin{algorithmic}[1]
		\State \textbf{Input:}  Message $M$
			\State \textbf{Input:}  Public key $pk=(\rho,\textbf{P})$
		\State \textbf{Input:}  Secret key $\textbf{s}$
		\State \textbf{Input:}  Random coin $r \in \{0,1\}^{256}$
		\State \textbf{Output:} Signature $\sigma \in R_q^k \times R_q \times R_q \times \{0,1\}^{256}$	
		\State $\mu:=\text{CRH}(M) \in \{0,1\}^{256}$
		\State $\textbf{A} =\text{GenA}(\rho)$ 
		\State $(\textbf{e}_1,\textbf{e}_2,e_3,e_4) \in S_{\eta}^k \times S_{\eta}^k \times S_{\eta}^1 \times S_{\eta}^1:=\text{GenSE}(r)$ \Comment{Sample $\textbf{e}_1, \textbf{e}_2 \in R_q^k$ and $e_3,e_4 \in R_q$ from a binomial distribution  of parameter $\eta$}
	
		\State $\textbf{z}_1=\text{NTT}^{-1}(\hat{\textbf{A}}^T \circ \text{NTT}(\textbf{e}_1)) + \textbf{e}_2$  \Comment{$\textbf{z}_1:=\textbf{A} \textbf{e}_1+\textbf{e}_2 \in R_q^k$}
		
		\State $z_2=\text{NTT}^{-1}(\hat{\textbf{P}}^T \circ \text{NTT}(\textbf{e}_2)) + e_4$
		\Comment{$z_2:=\textbf{P}^T
			 \textbf{e}_2+e_4 \in R_q$}
	\State $z_3=\text{NTT}^{-1}\left((\hat{\textbf{A}} \circ \hat{\textbf{P}} )^T \circ \hat{\textbf{e}_1}\right) + e_3+\text{NHSEncode}(\mu)$ 	\Comment{$z_3:= \textbf{A} \textbf{P} \textbf{e}_1+e_3+\text{NHSEncode}(\mu) \in R_q$}
	\State $h:=\text{CRH}\left(\mu || \text{CRH}\left(\text{NHSDecode}\left(\text{NTT}^{-1}((\hat{\textbf{A}}\circ \text{NTT}(\textbf{s}))^T \circ \hat{\textbf{e}_2}) \right) \right)\right)$ \Comment{$h:=\text{CRH}(\mu || \text{CRH}(\text{NHSDecode}(\textbf{A} \textbf{s} \textbf{e}_2)))$}
		\State  \textbf{Return} \, $\sigma=(\textbf{z}_1,z_2,z_3,h)$
	\end{algorithmic}
\end{algorithm}

	\begin{algorithm}
	\caption{Verifying}\label{alg:verify}
	\begin{algorithmic}[1]
			\State \textbf{Input:}  Message $M$
		\State \textbf{Input:}  Public key $pk=(\rho,\textbf{P})$
		\State \textbf{Input:} Signature $\sigma=(\textbf{z}_1,z_2,z_3,h)$		
			\State $\mu=\text{CRH}(M)$
		\State $w=z_2+z_3-\text{NTT}^{-1}\left(\text{NTT}(\textbf{P})^T \circ \text{NTT}(\textbf{z}_1) \right)$ \Comment{$w=z_2+z_3-\textbf{P} \textbf{z}_1 \in R_q$}
		\Return $\llbracket \text{NHSDecode}(w)=\mu \rrbracket$ \quad and $\llbracket h=\text{CRH}(\mu || \text{CRH}(\text{NHSDecode}(z_2)) \rrbracket$
	\end{algorithmic}
\end{algorithm}

\begin{remark}  \label{remark, formula for size}
	Note that the publick key $(\textbf{A},\textbf{P})$ has size $256+kn\log q$ bits or equivalently its size would be $32+\frac{kn\log q}{8}$ bytes. Likewise, the secret key $\textbf{s} \in R_q^k$ has size $\frac{kn\log q}{8}$ bytes, and the size of the signature $\sigma$ is $\frac{(k+2)n\log q}{8}+32$ bytes. The corresponding sizes for each parameter set are given in Table \ref{Table, security estimating}.
\end{remark}

\subsection{Decoding failure probability} \label{section, Decoding failure probability} Similar to computations as in \cite[Section 3.1]{sharafi2022ring}, we have
\begin{align*}
	z_2+z_3-\textbf{P}\textbf{z}_1&=\textbf{P}\textbf{e}_2+e_4+\textbf{A}\textbf{P}\textbf{e}_1+e_3+\text{NHSEncode}(\mu)-\textbf{P}\left(\textbf{A}\textbf{e}_1+\textbf{e}_2\right) \\
	&=\text{NHSEncode}(\mu)+\left(e_3+e_4\right), \\
	z_2&=\left(\textbf{A}\textbf{s}+\textbf{e}\right)\textbf{e}_2+e_4=\textbf{A}\textbf{s}\textbf{e}_2+\left(\textbf{e}\textbf{e}_2+e_4\right).
\end{align*}

Hence applying $\text{NHSDecode}$ on the above terms, the verifier obtains respectively $\mu$ and $\textbf{A}\textbf{s}\textbf{e}_2$ as long as the error terms $e_3+e_4$ and $\textbf{e}\textbf{e}_2+e_4$ have been sufficiently small sampled. But following NewHope \cite{NewHope-USENIX,alkim2016newhope} all the polynomials $\textbf{e},\textbf{e}_1,\textbf{e}_2,e_3,e_4$ are drawn from the centered binomial distribution $\psi_{\eta}$ of parameter $\eta=16$. Since the decoding function $\text{NHSDecode}$ maps from four coefficients to one bit, the rounding error for every $4$-dim chunk has size at most $q/4+4$. As mentioned in NewHope-Simple \cite{alkim2016newhope}, one can use the same analysis as in NewHope-USENIX \cite[Appendix D]{NewHope-USENIX} to conclude that the total failure rate in encoding-decoding process is at most $2^{-60}$. In Section \ref{section, parameters}, we will choose the parameter sets for the proposed scheme to achieve this failure probability.
(Note that the upper bound of the failure rate in Sharafi-Daghigh scheme \cite{sharafi2022ring} is $2^{-40}$).

\section{Unforgeability in Random Oracle Model} \label{section, Unforgeability}

The proposed digital signature scheme is a module-LWE-based version of the Sharafi-Daghigh algorithm \cite{sharafi2022ring}. 
Hence replacing ring-based hard problems with corresponding module-based ones, one can transmit the security proof in \cite[Section 4]{sharafi2022ring} to the module-based setting. However, here following Crystals-dilithium \cite[Section 6]{CRYSTALS-Dilithium}, we give a more detailed proof, including a simulation of signature queries, for the SUF-CMA security (Strongly Unforgeability under Chosen Message Attack\footnote{In this security model, the adversary gets the public key and has access to a signing oracle to sign messages of his choices. The adversary's goal is to produce a valid new signature for a previously signed message \cite{boneh2006strongly}.})  of the proposed signature scheme in the \textit{Classical Random Oracle Model (ROM)}.


\begin{theorem} \label{theorem, security}
	Assume that $H:\{0,1\}^* \rightarrow R_q$ is a cryptographic hash function modeled as a random oracle. If there exists an adversary $\mathcal{A}$ (who has classical access to $H$) that can break the UF-CMA security of the proposed signature, then there exist also adversaries $\mathcal{B}$ and $\mathcal{C}$ such that
	\begin{equation*}
		\text{Adv}^{\text{UF-CMA}}(\mathcal{A}) \leq \text{Adv}_{k,\eta}^{\text{MLWE}}(\mathcal{B}) + \text{Adv}_{k,\eta}^{\text{H-MSIS}}(\mathcal{C}),
	\end{equation*}
where
\begin{align*}
\text{Adv}_{k,\eta}^{\text{MLWE}}&:=|\text{Pr}[b=1| \textbf{A} \leftarrow R_q^{k \times k}; \textbf{P} \leftarrow R_q^k; b \leftarrow \mathcal{A}(\textbf{A},\textbf{P})] \\ 
&-\text{Pr}[b=1 | \textbf{A} \leftarrow R_q^k; \textbf{s} \leftarrow S_{\eta}^1; \textbf{e}\leftarrow S_{\eta}^1; b \leftarrow \mathcal{A}(\textbf{A},\textbf{A}\textbf{s}+\textbf{e})]|,
\end{align*}
and
\begin{align*}
&\text{Adv}_{k,\eta}^{\text{H-MSIS}}=\\
&\text{Pr}\left[0<||z||_{\infty} \leq \eta \,\wedge H(M||\textbf{A}z)=h^{\prime} | \textbf{A} \leftarrow R_q^{k \times k}; (z,h^{\prime},M) \leftarrow \mathcal{A}^{|H(.)>}(\textbf{A}) \right].
\end{align*}
\end{theorem}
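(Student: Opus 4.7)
The plan is to proceed by a sequence of game hops, bounding the change in the adversary's winning probability at each step by one of the two advertised advantages. Let $\mathrm{Game}_0$ denote the real UF-CMA game, in which the challenger generates $(\rho,\textbf{P}=\textbf{A}^T\textbf{s}+\textbf{e})$ honestly, answers signing queries using $\textbf{s}$ exactly as in Algorithm \ref{alg:signing}, and the adversary $\mathcal{A}$ eventually outputs a forgery $(M^*,\sigma^*=(\textbf{z}_1^*,z_2^*,z_3^*,h^*))$ that must verify and must not have been returned by the signing oracle.

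In $\mathrm{Game}_1$, I would replace the honestly generated public key $\textbf{P}=\textbf{A}^T\textbf{s}+\textbf{e}$ with a uniformly random $\textbf{P}\leftarrow R_q^k$. Any $\mathcal{A}$ whose winning probability changes by more than $\varepsilon$ between $\mathrm{Game}_0$ and $\mathrm{Game}_1$ directly yields a decisional Module-LWE distinguisher $\mathcal{B}$ of advantage $\varepsilon$, since the view of $\mathcal{A}$ in the two games depends on $\textbf{P}$ only through public data and (as long as we can simulate signing without $\textbf{s}$) does not require the challenger to know the secret. Hence $|\Pr[\mathrm{Win}_0]-\Pr[\mathrm{Win}_1]|\leq \mathrm{Adv}_{k,\eta}^{\mathrm{MLWE}}(\mathcal{B})$.

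In $\mathrm{Game}_2$, following the Fiat--Shamir with aborts paradigm of \cite{Lyu09,Lyu12} and the Dilithium proof of \cite[Section 6]{CRYSTALS-Dilithium}, I would replace the signing oracle by a simulator that has no access to $\textbf{s}$ and instead programs the random oracle $H$. On query $M$, the simulator computes $\mu=\mathrm{CRH}(M)$, samples $(\textbf{e}_1,\textbf{e}_2,e_3,e_4)$ from $S_\eta$ as in the real scheme, forms $\textbf{z}_1,z_2,z_3$ exactly as in Algorithm \ref{alg:signing} using the now-uniform $\textbf{P}$, samples $h$ uniformly in $\{0,1\}^{256}$, and programs $H$ so that the verification relation of Algorithm \ref{alg:verify} holds. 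Because $\textbf{P}$ is uniform in $\mathrm{Game}_1$, the simulator does not need $\textbf{s}$ to produce $\textbf{z}_1,z_2,z_3$ with the correct joint distribution. The only loss comes from the event that $H$ has already been queried on the point to be programmed, which by the min-entropy of $\textbf{z}_1$ and standard random-oracle accounting is negligible in the number of queries; this is the step I expect to be the main obstacle, since one must verify carefully that the simulated transcripts are statistically close to real ones and that programming never aborts except with negligible probability.

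Finally, in $\mathrm{Game}_2$ the secret key is nowhere used, so a successful forgery $(M^*,\sigma^*)$ must satisfy the verification relations even though the simulator never programmed $H$ at the critical input. Any valid forgery therefore passes $h^*=\mathrm{CRH}(\mathrm{CRH}(M^*)\|\mathrm{CRH}(\mathrm{NHSDecode}(z_2^*)))$ and $\mathrm{NHSDecode}(z_2^*+z_3^*-\textbf{P}\textbf{z}_1^*)=\mathrm{CRH}(M^*)$ for an output of $H$ whose preimage was fixed uniformly; by a straightforward reduction (or, if needed, via the forking lemma to obtain two forgeries sharing coins) one extracts from $\sigma^*$ a short nonzero vector $z$ with $\textbf{A} z = h'$ for some prescribed target, solving the inhomogeneous Module-SIS instance. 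This yields the adversary $\mathcal{C}$ with $\Pr[\mathrm{Win}_2]\leq \mathrm{Adv}_{k,\eta}^{\mathrm{H\text{-}MSIS}}(\mathcal{C})$. Combining the three bounds gives the claimed inequality.
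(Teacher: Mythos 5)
Your proposal is correct and follows essentially the same route as the paper's proof: a Module-LWE hop replacing the honest public key $(\mathbf{A},\mathbf{P})$ by a uniform pair (accounting for $\text{Adv}_{k,\eta}^{\text{MLWE}}$), followed by a rewinding/forking-lemma argument that extracts from two forgeries sharing the same random-oracle response a short nonzero $z$ with $\mathbf{A}z=0 \bmod q$ (accounting for $\text{Adv}_{k,\eta}^{\text{H-MSIS}}$). Your explicit intermediate game simulating the signing oracle by programming $H$ without the secret key makes rigorous a step the paper only gestures at, but the overall decomposition is the same.
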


\begin{proof}
The security of our proposed signature  algorithm is based on the decision Module-LWE  and the \textit{combined} hardness of Module-SIS with the hash function $H$. By the decision Module-LWE assumption, which  denoted as $\text{Adv}_{k,\eta}^{\text{MLWE}}$, the public key $(\textbf{A},\textbf{P})$ (a module-LWE sample) is indistinguishable from the pair $(\textbf{A}^{\prime},\textbf{P}^{\prime})$ chosen uniformly at random. This guarantees the security against the key recovery attack. To prove the security in the SUF-CMA model, let the adversary gets the public key $(\textbf{A},\textbf{P})$ and successfully produces a signature $\sigma^{\prime}=(\textbf{z}_1^{\prime},z_2^{\prime},z_3^{\prime},h^{\prime})$ for the message $M$ which is \textit{valid} according to  the Verification Algorithm \ref{alg:verify}.
	In particular, the adversary would be able to find  $z_2^{\prime} \in S_{\eta}^1$ such that
	\begin{equation*}
		H(\text{NHSDecode}(z_2^{\prime}))=H(\text{NHSDecode}(\textbf{A}\textbf{s}\textbf{e}_2)).
	\end{equation*}  
	
	Therefore to create a forged signature of the message M, the adversary will encounter two challenges; he has to either break the hash function $H$, or find a nonzero $z \in S_{\eta}^1$ for which $\textbf{A}.(z-\textbf{s}\textbf{e}_2)=0 \, \mathrm{mod}\, q$, i.e., find a solution of the Module-SIS problem given in Definition \ref{definition, Module-SIS} (with $m=1$ and $d=k^2$). This challenge is the aforementioned combined hard problem denoted as $\text{Adv}_{k,\eta}^{\text{H-MSIS}}$ (Note that as in Crystals-Dilithium, the infinity norm, rather than the Euclidean norm, has been considered to avoid the trivial solution $(q,0,\dots,0)^T$ for the underlying Module-SIS problem). More precisely, following Crystals-Dilithium \cite[Section 6.1]{CRYSTALS-Dilithium}, we simulate the signature queried by the adversary, to give the \textit{classical} reduction from Module-SIS to $\text{Adv}_{k,\eta}^{\text{H-MSIS}}$:
	
	Suppose the attacker $\mathcal{A}$ only has classical access to the function $H$ modeled as a random oracle. Assume that $\mathcal{A}$ gets the public key $(\textbf{A}, \textbf{P})$ and makes queries 
	\begin{equation*}
		H(\mu_1 || \textbf{A} \textbf{s} \textbf{e}_{2,1}), \dots, 	H(\mu_k || \textbf{A} \textbf{s} \textbf{e}_{2,t}),
	\end{equation*}
	for the messages $\mu_1,\dots,\mu_t$, where for every $i=1,2\dots,t$, $\textbf{e}_{2,i} \in S_{\eta}^k$ is the same as the error vector $\textbf{e}_2$ in the signing algorithm \ref{alg:signing}.
	 Then he receives randomly-chosen responses $h_1,\dots,h_t$ and outputs $\mu_i,\alpha_i,h_i$ for some $i=1,\dots,t$, such that $\alpha_i \in S_{\eta}^1$ and
	\begin{equation*}
		H(\mu_i || \textbf{A} \alpha_i)=h_i.
	\end{equation*}
	The reduction \textit{rewinds} $\mathcal{A}$ to the point where he made query $	H(\mu_i || \textbf{A} \textbf{s} \textbf{e}_{2,i})$ and reprograms the response to another randomly-chosen $h_i^{\prime}$. Then the \textit{forking lemma} states that the successful $\mathcal{A}$ has a chance of approximately $1/t$ to create a forgery on the same $i$.  In this case, he will output $\mu_i, \alpha_i^{\prime}, h_i^{\prime}$ satisfying
	\begin{equation*}
	H(\mu_i || \textbf{A} \alpha_i^{\prime})=h_i^{\prime}.
	\end{equation*}
	Since $h_i=h_i^{\prime}$, we will have 
	\begin{equation} \label{equation, A alphai}
		\textbf{A}.\left(\alpha_i -\alpha_i^{\prime}\right) =0,
	\end{equation}
	while $\alpha_i \neq \alpha_i^{\prime}$. On the other hand, all the coefficients of $\alpha_i - \alpha_i^{\prime}$ are small, and hence \eqref{equation, A alphai} gives a solution to Module-SIS.
	
\end{proof}

\begin{remark}
Although the above reduction is \textit{non-tight}, yet it can be used to set the parameters, see \cite[Section 6]{CRYSTALS-Dilithium}. In addition, we have to consider the security in the \textit{quantum random oracle model (QROM)}. As stated in Crystals-Dilithium, although the above reduction does not transfer over the quantum setting, but the assumption $\text{Adv}_{k,\eta}^{\text{H-MSIS}}$ is \textit{tightly} equivalent, even under quantum reductions, to the SUF-CMA security of the proposed signature. Hence, in the above theorem, one can assume that the adversary $\mathcal{A}$ has a quantum access to the hash function $H$, see  \cite[Section 6]{CRYSTALS-Dilithium} for more details.
\end{remark}


\section{Choosing Parameters and Security Estimation} \label{section, parameters}

We use the ring $R_q=\frac{\mathbb{Z}_q[x]}{(x^n+1)}$ with $n$  a power-of-two, a favorable choice by many Ring/Module-LWE based schemes. We use the polynomial ring of degree $n=256$, for every parameter set, as the basic block of the Module-LWE problem. Similar to NewHope Scheme\cite{NewHope-USENIX,alkim2016newhope} we choose the modulus $q=12289$ and sample the Module-LWE secret and error terms by the centered binomial distribution $\psi_{\eta}$ of parameter $\eta=16$. Note that $q \equiv 1 \, (\mathrm{mod}\, n)$ which  enables the use of NTT over the ring $R_q$.
Also it must be noted that due to replacing the Ring-LWE with Module-LWE, we have decreased the ring dimension $1024$ in NewHope \cite{NewHope-USENIX,alkim2016newhope} to $256$,  yet we use the same modulus and the same sampling distribution as in NewHope to achieve the decoding failure probability $2^{-60}$. In other words, since we have used the encoding-decoding functions introduced in the NewHope-Simple algorithm, using the same modulus $q=12289$, the same binomial distribution $\psi_{16}$ and the ring dimensions $256k^2$ (with $k=2$), ensures us that the upper bound $2^{-60}$, as in NewHope \cite{NewHope-USENIX,alkim2016newhope}, works also for the failure probability in the proposed scheme, see Section \ref{section, Decoding failure probability}. Moreover, using these parameters, we can use the same security analysis as in NewHope-USENIX \cite[Section 6]{NewHope-USENIX}, and in comparison to Sharafi-Daghigh scheme \cite{sharafi2022ring}, we obtain much smaller failure probability and higher security levels at the cost of increasing the modulus $q$, see Table \ref{Table, security estimating}.

\begin{remark}
	Note that increasing the parameter $k$ will increase the security level but makes the scheme too inefficient. So we have determined only one set of parameters, whereas finding different encoding-decoding functions to achieve an appropriate failure rate with a smaller modulus $q$, might be an interesting challenge for future works. 
\end{remark}

\subsection{Security Estimation} 

As mentioned before, the security of the proposed digital signature scheme is based on the hardness of Module-LWE and Module-SIS problems. Any $\text{MLWE}_{k,D}$ instance for some distribution $D$ can be viewed as an LWE instance of dimension $(nk)^2$. Indeed the underlying $\text{MLWE}_{k,D}$ problem for the proposed digital signature can be rewritten as finding $\text{vec}(\textbf{s}), \text{vec}(\textbf{e}) \in \mathbb{Z}^{nk}$ from $(\text{rot}(\textbf{A}), \text{vec}(\textbf{P}))$ where $(\textbf{A},\textbf{P})$ is the public key as in Algorithm \ref{alg:key gen}, $\text{vec}(.)$ maps a vector of ring elements to the vector obtained by concatenating the coefficients of its coordinates, and $\text{rot}(\textbf{A}) \in \mathbb{Z}_q^{nk \times nk}$ is obtained by replacing all entries $a_{ij} \in R_q$ of $\textbf{A}$ by the $n \times n$ matrix whose $z$-the column is $\text{vec}(x^{z-1}.a_{ij})$.
Similarly, the attack against the $\text{MSIS}_{k,\eta}$ instance can be mapped to a $\text{SIS}_{nk,\eta}$ instance by considering the matrix $\text{rot}(\textbf{A}) \in \mathbb{Z}^{nk \times nk}$
, see \cite[Appendix C]{CRYSTALS-Dilithium} for more details.

In order to do security estimation, we will not consider BKW-type attacks \cite{KF15} and linearization attacks \cite{AG11}, since based on our parameter set, there are only ``$(k+1)n$'' LWE samples available, see \cite[Section 5.1.1]{CRYSTALS-KYBER}. Hence the main challenge is security analysis against the \textit{lattice attacks}. All the best-known lattice attacks are essentially finding a nonzero short vector in the Euclidean lattices, using the Block-Korkine-Zolotarev (BKZ) lattice reduction algorithm \cite{SE94, CN11}. The algorithm BKZ proceeds by reducing a lattice basis using an SVP oracle in a smaller dimension $b$. The strength of BKZ increases with $b$, however, the cost of solving SVP is exponential in $b$ \cite[Appendix C]{CRYSTALS-Dilithium}. It is known that the number of calls to that oracle remains polynomial. Following NewHope \cite[Section 6.1]{NewHope-USENIX}, we ignore this polynomial factor, and rather evaluate only the \textit{core SVP hardness}, that is the cost of one call to an SVP oracle in dimension $b$, which is clearly a pessimistic estimation, from the defender's point of view.

There are two well-known BKZ-based attacks, usually referred to as
\textit{primal attack} and \textit{dual attack}. As explained in Crystals-Dilithium \cite[Appendix C]{CRYSTALS-Dilithium}, the primal attack for the proposed scheme is to find a short non-zero vector in the lattice
\begin{equation*}
	\Lambda=\{\textbf{x} \in \mathbb{Z}^d \, : \, \textbf{M}\textbf{x}=\textbf{0} \, \,  \mathrm{mod}\, q\},
\end{equation*}
where $\textbf{M}=(\text{rot}(\textbf{A})_{[1:m]} | \textbf{I}_m | \text{vec}(\textbf{P})_{[1:m]})$ is an $m \times d$ matrix with $d=nk+m+1$ and $m \leq nk$. By increasing the block size $b$ in BKZ, for all possible values $m$, the primal attack solves the SVP problem in the lattice $\Lambda$.

The dual attack consists in finding a short non-zero vector in the lattice
\begin{equation*}
	\Lambda^{\prime}=\{(\textbf{x},\textbf{y}) \in \mathbb{Z}^m \times \mathbb{Z}^d \, : \, \textbf{M}^T \textbf{x}+\textbf{y}=0 \, \mathrm{mod}\, q\},
\end{equation*}
where $\textbf{M}=(\text{rot}(\textbf{A}))_{[1:m]}$ is an $m \times d$ matrix  with $m \leq d=nk$. Similar to the primal attack, the dual attack solves the SVP problem in the lattice $\Lambda^{\prime}$, by increasing the block size $b$ in BKZ, for all possible values $m$. 

According to the parameters $n,k,q,\eta$, we have used the core SVP hardness methodology and the cost estimation of NewHope-USENIX \cite[Section 6]{NewHope-USENIX} against the primal and dual attacks, as presented in Table \ref{Table, security estimating}.

\begin{table}[!h] 
	\begin{center}
		\begin{tabular}{|cccc|ccc|c|c|p{1.8cm}|c|c|p{1.8cm}|}	
			\hline
			\multicolumn{4}{|c|}{\textbf{Parameters}} & \multicolumn{3}{|p{1.9cm}|}{\textbf{Key Sizes (in Bytes)}} & \multicolumn{3}{|c|}{\textbf{Primal attack}} & \multicolumn{3}{|c|}{\textbf{Dual attack}} \\
			\hline 
			$n$& $q$ &$k$ & $\eta$ & pk & sk & sig  & $m$ &	$b$  & Classical (Quantum) core SVP & $m$ & $b$ &  Classical (Quantum) core SVP \\
			\hline
			$256$ & $12289$ & $2$ & $16$ & 901.5 & 869.5 & 1771 & 1100 & 967 & 282 (256) & 1099 & 962 & 281 (255) \\ \hline
		\end{tabular} 
		\caption{Parameter sets for the proposed signature scheme. The parameters $n,q,k,\eta$ refer to the ring dimension, the modulus, the module-LWE dimension, and the parameter of the centered binomial distribution $\psi_{\eta}$, respectively. The notations ``pk'', ``sk'' and ``sig'' denote respectively the public key $(\mathbf{A},\mathbf{P})$, the secret key $\mathbf{s}$ and the signature $\sigma$ whose sizes are formulated in Remark \ref{remark, formula for size}. The value ``$b$'' denotes the block size and ``$m$'' denotes the number of used samples in the algorithm BKZ. As discussed in Section 5, the parameters $b$ and $m$, and consequently the estimated costs of Primal and Dual attacks, are taken from NewHope-USENIX \cite[Section 6]{NewHope-USENIX}.}
		\label{Table, security estimating}
	\end{center}
\end{table}

\section{Comparison to some related works}

Two of the most celebrated lattice-based signature algorithms are Crystals-Dilithium \cite{CRYSTALS-Dilithium} and FALCON \cite{Falcon-NIST}, selected for
NIST \footnote{National Institute of Standards and Technology} Post-Quantum
Cryptography Standardization  \cite{alagic2022status}. Crystals-Dilithium, one of our main references, is based on the Fiat-Shamir with aborts approach \cite{Lyu09} whose security is based on Module-LWE and Module-SIS problems. As stated in \cite[Section 4.4.2]{alagic2022status}, FALCON \footnote{Fast Fourier Lattice-based Compact Signatures over NTRU} is a lattice-based signature scheme utilizing the ``hash-and-sign'' paradigm. FALCON follows the GPV framework, introduced by Gentry, Peikert, and Vaikuntanathan \cite{GPV08} and builds on a sequence of works whose aim is to instantiate the GPV approach efficiently in NTRU lattices \cite{stehle2011making,ducas2014efficient,ducas2016fast} with a particular focus on the compactness of key sizes. There is also another lattice-based digital signature, namely ``qTESLA'' \cite{bindel2018submission}, among the second-round (but not finalist) candidate algorithms in the NIST post-quantum project. qTESLA is a Ring-LWE-based digital signature in which signing is done using hash functions and Fiat-Shamir with aborts technique.

In addition to the above algorithms in the NIST post-quantum project, for comparison, we have tried to consider those digital signatures, including Sharafi-Daghigh scheme \cite{sharafi2022ring}, BLISS \cite{DDLL13}, and GLP \cite{guneysu2012practical}, that are most relevant to the proposed scheme. The comparison results are presented in Table \ref{Table, comparison}.

\begin{table}
	\begin{center}
		\begin{tabular}{|p{3.8cm}|c|c|c|c|c|}
\hline
	\textbf{Algorithm} &	\multicolumn{3}{|c|}{\textbf{Key Sizes (in Bytes)}} & 	\multicolumn{2}{|c|}{\textbf{Security level (bits)}} \\
\cline{2-6}
		 & pk & sk & sig & Classic & Quantum \\ \hline
		 Dilithium \cite{CRYSTALS-Dilithium} (NIST Security Level 3) & 1952 & 4016 & 3293 & 182 & 165 \\ \hline
		  Dilithium \cite{CRYSTALS-Dilithium} (NIST Security Level 5) & 2592 & 4880 & 4595 & 252 & 229 \\ \hline
		  FALCON \cite{Falcon-NIST} (NIST Security Level 1) & 897 & 1281 & 666 & 133 & 121 \\ \hline
		  FALCON \cite{Falcon-NIST} (NIST Security Level 5) & 1793 & 2305 & 1280 & 273 & 248 \\ \hline
		  qTESLA \cite{bindel2018submission} (NIST Security Level 1) & 14880 & 4576 & 2848 & 132 & 123 \\ \hline 
		   qTESLA \cite{bindel2018submission} (NIST Security Level 3) & 39712 & 12320 & 6176 & 247 & 270 \\ \hline
		   TESLA-2 \cite{ABB+17} & 21799  & 7700  & 4000 & 128 & 128 \\ \hline
		    TESLA\#-II \cite{barreto2016sharper} & 7168 & 4224 & 3488 & 256 & 128 \\ \hline
		 Ring-TESLA-II \cite{chopra2016improved} & 1500 & 3300 & 1800 & 80 & 73 \\ \hline 
		   BLISS-IV \cite{DDLL13} & 875 & 375 & 812.5 & 192 & -- \\ \hline
		   GLP (Set II) \cite{guneysu2012practical} & 3125 & 406 & 2350 & $> 256$ & -- \\ \hline
		   Sharafi-Daghigh \cite{sharafi2022ring} & 1032 & 520 & 1565 & -- & 92 \\ \hline
		   Proposed scheme & 901.5 & 869.5 & 1771 & 281 & 255 \\ \hline
		\end{tabular}
		\caption{Comparison of the parameters of the proposed signature scheme to some well-known lattice-based signature algorithms. The notations ``pk'', ``sk'', and ``sig'' denote public key, secret key, and signature, respectively. Also, blanks mean that we could not find the related values in the corresponding references.}
		\label{Table, comparison}
	\end{center}
\end{table}

\section*{conflict of interest}

On behalf of all authors, the corresponding author states that there is no conflict of interest.


\clearpage 

\bibliographystyle{alpha}

\bibliography{library-MLWE}


\end{document}